\numberwithin{equation}{section}
\newcommand{\bea}{\begin{eqnarray}}
\newcommand{\eea}{\end{eqnarray}}
\newcommand{\be}{\begin{eqnarray*}}
\newcommand{\ee}{\end{eqnarray*}}
\newtheorem{theorem}{Theorem}[section]
\newtheorem{corollary}{Corollary}[section]
\newtheorem{definition}{Definition}[section]
\newtheorem{proposition}{Proposition}[section]
\newtheorem{example}{Example}[section]
\begin{document}
\title[Boolean Networks with Multi-Expressions and Parameters]{Boolean Networks with Multi-Expressions and Parameters}
\author[Yi Ming Zou]{Yi Ming Zou}
\address{Department of Mathematical Sciences, University of Wisconsin, Milwaukee, WI 53201, USA} \email{ymzou@uwm.edu}
\subjclass{Primary: 94C10; Secondary: 05C38.} 
\keywords{Boolean networks, gene regulatory networks, multistate, asynchronous, bioinformatics, computational biology}
\maketitle
\date{}

\begin{abstract}
To model biological systems using networks, it is desirable to allow more than two levels of expression for the nodes and to allow the introduction of parameters. Various modeling and simulation methods addressing these needs using Boolean models, both synchronous and asynchronous, have been proposed in the literature.  However, analytical study of these more general Boolean networks models is lagging. This paper aims to develop a concise theory for these different Boolean logic based modeling methods. Boolean models for networks where each node can have more than two levels of expression and Boolean models with parameters are defined algebraically with examples provided. Certain classes of random asynchronous Boolean networks and deterministic moduli asynchronous Boolean networks are investigated in detail using the setting introduced in this paper. The derived theorems provide a clear picture for the attractor structures of these asynchronous Boolean networks.
\end{abstract}
\section{Introduction}
\par
\medskip
A Boolean network of $n$ nodes can be represented by a function from  $F_2^n$ to  $F_2^n$, where $F_2=\{0,1\}$ is the field of two elements, and  $F_2^n$ is the $n$-dimensional space over $F_2$. When Boolean networks are used to model biological systems, such as gene regulatory networks, the dynamics of these biological systems are modeled using the discrete dynamical systems defined by these Boolean functions via iterations. Given a Boolean function $f:F_2^n\rightarrow F_2^n$, we call the elements of $F_2^n$ the {\it states}, and call the function graph (directed) $S(f)$ of $f$ the {\it state space}. So the nodes of $S(f)$ are the elements of $F_2^n$, and there is a directed edge from node $a$ to node $b$, denoted by $a\longrightarrow b$ in $S(f)$, if and only if $f(a) = b$. We call a directed cycle of $S(f)$ a {\it limit cycle}. For a limit cycle of length $1$, we call the unique node that forms the limit cycle {\it a fixed point} or {\it a stable state}. 
\par
Ideally, one would like to be able to compute these limit cycles from a given description of the Boolean network, since the limit cycles provide essential information about a Boolean network and graphical diagrams can only be drawn for small $n$'s. However, even the detection of a stable state of a Boolean network is NP complete in general \cite{Tamu1, Zhan}. Thus, developing new approaches and algorithms to address the need of modeling complex systems have been the focus of research. We now have some of the tools capable of finding all stable states of Boolean networks with about $100$ nodes if each of the nodes has only a few connections \cite{ADAM}. On the other hand, detecting the limit cycles of length $> 1$ is usually based on the detection of stable states. Namely, one computes the stable states of the composition function $f\circ f$ for limit cycles of length $\le 2$, then compute the stable states of $f\circ f\circ f$ for limit cycles of length $\le 3$, etc., and then gets the limit cycles of various lengths by taking the differences. There have been studies devoted to linking the topology of the dependency graph of a Boolean network with information on its limit cycles \cite{Col1, Gro2, Jar1, Jar2, Jar4, Laub, Zou10}, but due to the nature of the problem, studies in this direction must deal with special types or particular features of Boolean networks \cite{Kauf1,Mol,Zou12}. Simulation methods, in which one runs iterations by starting with some randomly selected initial states in order to gain information about a Boolean network's long term behavior, are still the main tools used in analyzing biological Boolean models \cite{Albe, Kino09, SR07}, in particular large Boolean models.
\par
We propose to consider various iteration schemes for Boolean models as parameterized Boolean networks. From a parameterized Boolean network, we can construct sequences of functions that define the dynamics of different iteration schemes, and thus will allow us to analyze the long term behaviors of the models using existing analytical tools. We will consider parameterized Boolean networks under a more general setting by using vectors to create more than two levels of expressions for the nodes. We now give some motivations for our work.
\par  
Suppose that a Boolean network with $n$ nodes is given by 
\bea\label{ea1}
\mathbf{f}=(f_1,\ldots,f_n): F_2^n\rightarrow F_2^n,
\eea
where $f_1,\ldots,f_n$ are functions $F_2^n\rightarrow F_2$ that define the update rules for the nodes. Depending on the iteration schemes, the Boolean models derived from $f$ are named differently \cite{Ger, Kauf69, Shmu02}. The basic model, called the synchronous model, is based on the formulation that the states of the nodes at time $t+1$ are given by the value of $f$ at time $t$, i.e., all nodes are updated simultaneously by  
\bea\label{ea2}
x_i(t+1) = f_i(x_1(t),\ldots,x_n(t)), \quad 1\le i\le n.
\eea 
If the same function $f$ is used at each iteration step, but the nodes are not assumed to be updated simultaneously as in (\ref{ea2}), then we have the asynchronous models, which vary depending on their updating schemes. 
\par
For example, the so-called deterministic moduli asynchronous random Boolean networks \cite{Ger} have two parameters $P_i$ and $Q_i$ for each node $i$, where $Q_i<P_i$ are positive integers generated randomly and remain fixed throughout the iteration process. Node $i$ will be updated when $t\equiv Q_i\; \mbox{mod $P_i$}$ (i.e. the remainder of $t$ divided by $P_i$ is $Q_i$). If, at a certain time step, there are several nodes fulfill their updating conditions, then these nodes will be updated from left to right, with any to be updated node taking into account of the states of the already updated nodes.
\par
For another example \cite{Cha}, one uses a permutation $p=(p_1,p_2,\ldots,p_n)$ of $(1,2,\ldots, n)$, randomly generated at each time step, to decide the updating order for the nodes. Node $i$ is the $p_i$-th node to be updated according to 
\bea\label{ea3}
x_i(t+1) = f_i(x_1(t_{p_1}),x_2(t_{p_2}),\ldots,x_n(t_{p_n})),
\eea
where $t_{p_1} =t$ if node $1$ should be updated later than node $i$ (i.e. $p_i<p_1$), and $t_{p_1} = t+1$ if node $1$ has been updated already (i.e. $p_1<p_i$). Other $t_{p_j}$'s are defined similarly. 
\par 
These two examples have something in common: the updating schemes work at one node at each iteration step, the new state of the to be updated node is computed based on the current states of all nodes, and the entire network is considered as updated when a round of updates for all nodes is completed. For other asynchronous update schemes depending on various modeling needs, we refer the readers to the references cited. These different Boolean models have a fixed updating rule $f$, they only differ by the updating schemes. Since these asynchronous schemes allow more complexities to be built into a model to address the needs in applications, it is desirable to investigate how different updating schemes affect the dynamics of $f$ as a synchronous model, and to use this information to understand the more complicated asynchronous models better.
\par
 Study based on simulations has revealed some interesting properties of the effects of updating schemes on the dynamics of $f$ \cite{Ger}, but our understanding of these modeling schemes is still rather poor in comparison with the synchronous models. To analyze these models, it is necessary to put these descriptive methods into some clearly defined mathematical formulations. One way to achieve this is to consider these different updating schemes by starting with parameterized Boolean networks. For our purpose here, a parameterized Boolean network is a set of Boolean functions, and, since we are dealing with finite sets, such a Boolean network can be parameterized by a sequence of integers (see Section 3 for a detailed definition). This setting will facilitate the analysis of the long term behaviors these Boolean models. 
 \par
 We remark that the idea of using parameters for Boolean networks is not new. The so-called random Boolean networks \cite{Kauf69} and probabilistic Boolean networks \cite{Shmu02} can be considered as certain types of parametrized Boolean networks. However, the fact that a probabilistic type Boolean network does not start with a fixed Boolean network creates extra complexities for the model, which further limits the size of a network we can analyze \cite{Ivan09}. We will not consider probabilistic Boolean networks here. 
\par
There are different methods for handling systems where each node has expressions not restricted to just ``on'' or ``off''. Among these, one approach is to use finite fields with more than two elements, and another is to use vector values for the expressions of each node in Boolean networks  \cite{Ger, ADAM, Hin, Laub, SLK99}. For multi-expression networks, and, in particular, large size networks, using Boolean networks, if possible, can still provide computational advantage, since many high performance algorithms exist for Boolean logic problems such as SAT (see \cite{Tamu1, Zhan}, and the references therein, see also Theorem 3.1 below). Methods of allowing multi-dimensional vector values for each node using Boolean networks exist in the literature. For example, \cite{Ger} suggests clustering several nodes together to form a single node. In \cite{Hin}, a similar approach was applied to developing a method of using Boolean networks to capture the basic properties of biological systems described by delay differential equations. Instead of using a single variable $x$ to represent a gene, to capture the delay, $x$, $x_1$, and $x_{old}$ were used. However, though this approach is natural, a review of the literature indicates that some promotion is needed  for its application, such as in the increasing important study of biological systems which exhibit multi-state properties \cite{Lim}. This is due, perhaps, to the fact that these methods are hiding in the literature. In this paper, we consider parameterized Boolean networks that allow more than two levels of expression and allow {\it different numbers of expression levels for different nodes} by assigning vector values to the nodes.  For example, control types of nodes in a network can just have ``on'' or ``off'' as their states, while other nodes can have more than two states, such as ``high'', ``partial high'', ``partial low'', and ``low''.
\par 
\medskip
We organize this paper as follows. In section 2 and section 3, we define multi-expression Boolean networks and parametrized multi-expression Boolean networks. In section 4, we consider the dynamics of the Boolean networks defined in section 2 and section 3. Several examples are provided in these sections to explain the concepts. In section 5, we consider the dynamics of certain random asynchronous Boolean models and give a theorem about the dynamics of these models based on stochastic matrix theory. In section 6, we prove a theorem which links the attractor sets of a deterministic moduli asynchronous random Boolean network to the limit cycles of a synchronous Boolean network. In section 7, we conclude this paper.
\section{Multi-Expression Boolean Networks}
\par
\medskip
We define Boolean networks which can capture different multi-expression levels in this section. In the next section, we will consider the parametrization of these Boolean networks. We will use Boolean polynomials to represent the Boolean functions.  Boolean polynomials are elements of the Boolean ring 
\bea\label{eb1}
F_2[x_1,\cdots, x_M]/(x_i^2-x_i, 1\le i\le M),
\eea
i.e., the quotient of the polynomial ring $F_2[x_1,\cdots,x_M]$ by the ideal generated by $x_i^2-x_i, 1\le i\le M$. The conversion between logical expressions and Boolean polynomials is given by 
\begin{gather*}
x_i\wedge x_j = x_ix_j,\;\;
x_i\vee x_j = x_i+x_j+x_ix_j,\\
\neg x_i = x_i+1.
\end{gather*}
\par
We use the following notation: 
\be
\mathbf{x}=(x_1,\ldots, x_M), \quad \mathbf{x}(t)=(x_1(t),\ldots, x_M(t)),\\
\mathbf{x}^{a} = x_1^{a_1}\cdots x_M^{a_M}\;\mbox{if $a=(a_1,\ldots, a_M)$},
\ee  
and write $B[\mathbf{x}]$ for the Boolean polynomial ring of (\ref{eb1}).
\par
\begin{definition} A Multi-Expression Boolean Network (MEBN) with $n$ nodes is a function
\bea\label{eb2}
f=(f_1, f_2,\ldots, f_n): F_2^M\longrightarrow F_2^M,
\eea
where 
\be
f_i: F_2^M\longrightarrow F_2^{M_i},\;\;1\le i\le n,
\ee
for some positive integers $M_i$ such that $M_1+\cdots+M_n = M$.  We denote the nodes by $V_i,\;1\le i\le n$.
\end{definition}
\par
This definition of an MEBN allows the possibility of different levels of expressions for different nodes. If all $M_i=1$, then $f$ is just a Boolean network where each node has only two possible states ``on'' and ``off'' as usual. In this case, we simply use the variables $x_i$ for the nodes. The limit cycles of an MEBN are defined as before. For example, the fixed points are defined by the equation $f(\mathbf{x}) = \mathbf{x}$. We elucidate our MEBN definition by considering some examples.
\par
\begin{example} Consider a system with two nodes. The first node can have $4$ possible states: ``high'' ($\sim 100\%$), ``partial high'' ($\sim 75\%$), ``partial low'' ($\sim 25\%$), and ``low'' ($\sim 0\%$). The second node has only two states: ``on'' and ``off''. A Boolean network used to describe this system can then be given by $f = (f_1,f_2)$, where $f_1 : F_2^3\rightarrow F_2^2$ and $f_2 : F_2^3\rightarrow F_2$. The possible states of the first node are given by the vectors in $F_2^2 = \{(00), (01), (10), (11)\}$, with $(11)$ meaning ``high'', $(10)$ meaning ``partial high'', $(01)$ meaning ``partia low'', and $(00)$ meaning ``low''. In applications, we can use the second node to control the first node. For example, the second node being ``off'' causes the first node's expression level to decrease, and the second node being ``on'' causes the first node's expression level to increase. Here is such a network:
{\small \be
f_1 &=& (x_1x_2 + x_1x_3 +x_2x_3, x_1x_2 + x_2x_3 + x_1x_3 + x_1 + x_3),\\
 f_2 &=& x_3.
\ee}
This Boolean network has two trajectories and two fixed points. The second node serves as a ``switch'' between these two trajectories. For instance, if the system is at the stable state $(11),1$ and the second node changes its status from $1$ to $0$, then the system changes its state to $(11),0$. If the second node remains at the ``off'' position, then the system will evolve down to the stable state $(00),0$ and remain there until the second node changes its status again.
The state space graph of this Boolean network is given by Fig. 1.
\par
\begin{figure}[h]
\begin{center}
\includegraphics[width=1in, height=1.2in]{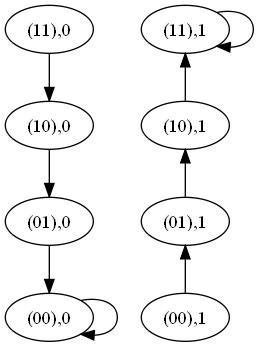} 
\caption{{\footnotesize The second node, whose states are given by the last digit, serves as a switch between the two trajectories.}} 
\label{Fig1}
\end{center}
\end{figure}
\par
\end{example}
\par
For expression levels other than $2^k$, one can choose the least $k$ such that $2^k$ upper bounds the number of levels, then considers grouping several vectors together and treats all of them as for the same expression level.
\par
\begin{example} Suppose an DNA replication switch shows $3$ states: ``on'', ``intermediate'', and ``off''. Then we can use 
\be
F_2^2= \{(00), (01), (10), (11)\}
\ee
for this switch, and treat both $(01)$ and $(10)$ as intermediate states when perform analysis of the system using a Boolean model. For $5$ levels, we can use $F_2^3$. For example, we can treat $(111)$ and $(110)$  both as ``on'', $(101)$ and $(100)$ both as ``high partial'', $(011)$ as ``medium'', $(010)$ and $(001)$ as ``low partial'', and $(000)$ as ``off''. Of course, these interpretations depend on the actual modeling needs. Here is a concrete example of an MEBN, in which the first node has five levels of expressions and the second node serves as a control switch assuming two states ``on'' and ``off'':
{\small
\be
f_1 &=& (x_2x_3x_4 + x_1x_4x_2 + x_1x_4x_3 + x_1x_4 + x_1x_3x_2+ x_1x_3 + x_1x_2,\\
  {}  &{}& x_1x_3x_2 + x_1x_4x_3 + x_1x_4x_2 + x_2x_3x_4 + x_1x_2+ x_2x_4\\ 
      {}  &{}&    + x_1 + x_2x_3 + x_3x_4 + x_1x_4 + x_1x_3,\\
  {}  &{}&  x_1x_2x_3x_4 + x_2x_3x_1 +  x_1x_4x_3 + x_1x_4x_2 + x_2x_3x_4\\
  {}  &{}& + x_1x_2 + x_2x_4+ x_1 + x_2 + x_4 + x_2x_3 + x_3x_4 + x_1x_4 + x_1x_3)\\
f_2 &=& x_4
\ee
}
The state space graph of this Boolean network is given by Fig. 2.
\par
\begin{figure}[h]
\begin{center}
\includegraphics[width=3.5in, height=1.5in]{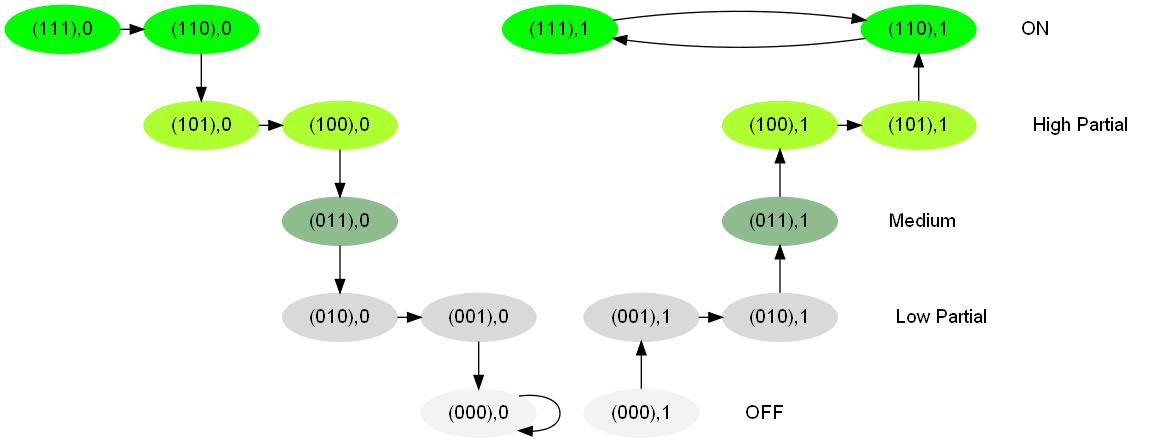} 
\caption{{\footnotesize The nodes with the same colors represent the same expression levels. There are $5$ levels. Green stands for ``on'', greenyellow for ``high partial'', seagreen for ``medium'', gray for ``low partial'', and light for ``off''. When the second node is ``on'' (``off''), the expression level increases (decreases) until stabilized. The space graph has one stable state $0000$ and a limit cycle of length $2$ given by $\{1111, 1101\}$.}} 
\label{Fig2}
\end{center}
\end{figure}
\end{example}
\par\medskip
It is not hard to see that, in general, the only limit cycles which always exist and remain the same for the iteration schemes mentioned in the introduction section are the fixed points. In general, asynchronous schemes can destroy the limit cycles of length $\ge 2$ of a Boolean network $f$ \cite{Ger} (see examples in section 4). Even for the probabilistic Boolean networks, the transition probabilities are computed differently according to whether or not the current state is a fixed point \cite{Ivan09}. Thus as far as modeling is concerned, regardless of the iteration scheme used, improving our ability to detect the fixed points of a Boolean network is fundamental in dealing with complex biological systems. We describe a theorem for this purpose.   
\par
For an MEBN $f = (f_1,\ldots,f_n)$, where 
\be
f_i = (f_{i1},\ldots, f_{iM_i}) : F_2^M\longrightarrow F_2^{M_i},\;\;1\le i\le n,
\ee
the equation $f(\mathbf{x})=\mathbf{x}$ that defines the fixed points of $f$ is given explicitly as
\be
f_{ij}(\mathbf{x}) = x_{ij}, \; 1\le i\le n,\; 1\le j\le M_i,
\ee
where we identify $(x_{11},\ldots, x_{1M_1},x_{21},\ldots,x_{2M_2},\ldots,x_{n1},\ldots,x_{nM_n})$ with $\mathbf{x} = (x_1,x_2,\ldots,x_M)$.  We define the following Boolean function associated with $f$
\bea
m_{f} = 1+\prod_{ \stackrel
{1\le i\le n} {1\le j\le M_i}}(f_{ij}+x_{ij}+1)\;:\;F_2^M\longrightarrow F_2.
\eea
Then we have the following theorem \cite{Zou12}:
\begin{theorem} The fixed points of $f$ are the solutions of the equation $m_f = 0$. 
\end{theorem}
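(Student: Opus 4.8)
The plan is to prove the statement by direct pointwise evaluation over $F_2$, treating $m_f$ as a function $F_2^M\longrightarrow F_2$ rather than manipulating it as a formal polynomial. First I would fix an arbitrary state $\mathbf{x}\in F_2^M$ and recall the identification of the coordinates $(x_{11},\ldots,x_{1M_1},\ldots,x_{n1},\ldots,x_{nM_n})$ with $(x_1,\ldots,x_M)$, so that, by the explicit form of the equation $f(\mathbf{x})=\mathbf{x}$ recorded just before the theorem, $\mathbf{x}$ is a fixed point of $f$ exactly when $f_{ij}(\mathbf{x})=x_{ij}$ for every pair $(i,j)$ with $1\le i\le n$ and $1\le j\le M_i$.

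Next I would record the elementary fact that for $a,b\in F_2$ one has $a+b+1=1$ if $a=b$ and $a+b+1=0$ if $a\ne b$; this is immediate since $a+b=0$ in $F_2$ is equivalent to $a=b$. Applying this with $a=f_{ij}(\mathbf{x})$ and $b=x_{ij}$, each factor $f_{ij}(\mathbf{x})+x_{ij}+1$ appearing in the product in the definition of $m_f$ takes the value $1$ precisely when $f_{ij}(\mathbf{x})=x_{ij}$, and the value $0$ otherwise. I would then combine this with the observation that a product of finitely many elements of $F_2$ equals $1$ if and only if every one of them equals $1$, and equals $0$ as soon as one of them equals $0$. It follows that
\[
\prod_{\stackrel{1\le i\le n}{1\le j\le M_i}}\bigl(f_{ij}(\mathbf{x})+x_{ij}+1\bigr)
\]
evaluates to $1$ when $f_{ij}(\mathbf{x})=x_{ij}$ holds simultaneously for all $(i,j)$ — that is, when $\mathbf{x}$ is a fixed point of $f$ — and evaluates to $0$ otherwise. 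Hence $m_f(\mathbf{x})=1+1=0$ when $\mathbf{x}$ is a fixed point and $m_f(\mathbf{x})=1+0=1$ when it is not, so the solution set of $m_f=0$ is exactly the set of fixed points of $f$, which is the claim.

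I do not expect any genuine obstacle here; the entire content is the two elementary facts about arithmetic in $F_2$ together with careful bookkeeping of the double-indexed coordinates. The one point worth stating explicitly is that the verification is carried out at the level of evaluation at points of $F_2^M$, so the relations $x_{ij}^2=x_{ij}$ defining the Boolean ring $B[\mathbf{x}]$ play no role in the proof; they are relevant only if one additionally wishes to view $m_f$ as a canonical (reduced) element of $B[\mathbf{x}]$ rather than merely as a function.
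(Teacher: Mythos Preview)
Your argument is correct: the pointwise evaluation over $F_2$ using $a+b+1=1\iff a=b$ and the multiplicativity of the AND are exactly what is needed, and the bookkeeping with the double indices is handled properly. Note, however, that the paper does not give its own proof of this theorem; it simply states the result with a citation to \cite{Zou12}, so there is no in-paper proof to compare against. Your direct verification is the standard and expected one.
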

\par\medskip
{\bf Remark.} The above theorem shows that detecting a single fixed point of a Boolean network is equivalent to the SAT problem and hence NP complete in general.
\par\medskip
From the above theorem, we have the following corollary.
\begin{corollary} A Boolean network $f$ does not have a fixed point if and only if $m_f=1$.
\end{corollary}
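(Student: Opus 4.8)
The plan is to read the statement off directly from Theorem 3.1 together with the defining property of the Boolean ring $B[\mathbf{x}]$. First I would recall that, by Theorem 3.1, the set of fixed points of $f$ is exactly $\{\mathbf{x}\in F_2^M : m_f(\mathbf{x})=0\}$. Hence $f$ has no fixed point precisely when the equation $m_f=0$ has no solution in $F_2^M$, that is, when $m_f(\mathbf{x})=1$ for every $\mathbf{x}\in F_2^M$, since the function $m_f$ takes values in $F_2=\{0,1\}$ and the two values $0,1$ are mutually exclusive.

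Next I would translate the pointwise condition ``$m_f(\mathbf{x})=1$ for all $\mathbf{x}$'' into the algebraic identity ``$m_f=1$ in $B[\mathbf{x}]$''. This is where the one genuine point of the argument lies: in the Boolean ring $F_2[x_1,\ldots,x_M]/(x_i^2-x_i,\,1\le i\le M)$ two elements are equal if and only if they induce the same function $F_2^M\to F_2$, because every such function is represented by a unique multilinear polynomial. In particular the constant function $1$ is represented only by the ring element $1$, so $m_f$ equals the constant function $1$ if and only if $m_f=1$ as an element of $B[\mathbf{x}]$. One should also note in passing that the product in the displayed formula defining $m_f$ is already reduced modulo the relations $x_{ij}^2=x_{ij}$, so it does legitimately name an element of $B[\mathbf{x}]$.

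Combining the two steps gives both implications simultaneously: $f$ has no fixed point $\iff$ the equation $m_f=0$ is unsolvable over $F_2^M$ $\iff$ $m_f$ is identically $1$ on $F_2^M$ $\iff$ $m_f=1$ in $B[\mathbf{x}]$. I expect no real obstacle here; the only thing to be careful about is not to conflate the polynomial $m_f$ with its associated function without invoking the uniqueness of multilinear representatives, which is exactly what makes the passage from ``no solutions of $m_f=0$'' to the clean algebraic criterion ``$m_f=1$'' rigorous.
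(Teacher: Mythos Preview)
Your argument is correct and matches the paper's approach: the paper simply states that the corollary follows from the preceding theorem, and your proof spells out precisely why (no solution to $m_f=0$ means $m_f$ is identically $1$, and in $B[\mathbf{x}]$ a function determines its representing polynomial uniquely). One small slip: the theorem you invoke is Theorem~2.1 in the paper's numbering, not Theorem~3.1.
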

\par\medskip
\begin{example} Consider the MEBN $f$ of Example 2.1. We have
\be
m_f =x_1x_2+x_1x_3+ x_2x_3+x_1+x_2+x_3.
\ee
The solutions of $m_f=0$ are $(000)$ (set $x_1=0$ and then solve the resulted equation) and $(111)$ as given in Fig. 1.
\end{example}
\par
For certain applications of Boolean networks, especially when asynchronous schemes are applied to model biological systems, it is necessary to consider a more general definition of ``limit cycles'' for the long term behavior of a network. We will address this point in section 4.
\section{Parametrized Boolean Networks}
\par
There are different possible ways to introduce parameters for Boolean networks. For our purposes, we make the following definition. 
\par
\begin{definition} Given positive integers $M_i,\;1\le i\le n$, and $M$, such that $M_1+\cdots+M_n = M$, we define a Parametrized Multi-Expression Boolean Network (PMEBN) to be a set of functions $S=\{f,g,h,\ldots\}$, where each function is an MEBN as defined by Definition 2.1.
\end{definition}
\par
Since for any positive integer $M$, there are only finitely many (the number is $(2^M)^{2^M}=2^{M2^M}$) functions $F_2^M\rightarrow F_2^M$, we can always use a set of positive integers $\{1,2,\ldots,s\}$ to parametrize the elements of a PMEBN. If a PMEBN contains only one function, then it is just an MEBN as in section 2.
\begin{example}
Suppose that we want to use a Boolean network to model a gene regulatory network which the interactions among the genes are only partially known. Then we may want to use a parametrized Boolean network, i.e., instead of picking a single one that fits the known information, we can describe a set of Boolean networks, all fit the known information, and study this set of Boolean networks as a whole to gain information about the gene regulatory network. 
\end{example}
\par
The situations described in Example 3.1 are typical in real world applications. We give some more detail for a special case when the Boolean network we are looking for has known values on a given subset of the state space. The problem is this: we are interested in a Boolean network $f : F_2^M\rightarrow F_2^M$, whose values on a given subset $D\subset F_2^M$, small in size compare with the whole state space $F_2^M$, are known. Because the information on $f$ is partial, there are many Boolean networks fit the profile. Observe the following:
\par
(a) If $f$ fits the profile of the Boolean network we are looking for and $h$ is a Boolean network that is identically $0$ on $D$, then $f+h$ also fits the profile.
\par
(b)  If two Boolean networks $f$ and $g$ take identical values on the states in $D$, then $f-g = f+g = (f_1+g_1,\ldots, f_M+g_M)$ (since this is over $F_2$) is identically $0$ ($=(0\cdots 0)$) on $D$.
\par
(c) If $h$ is identically $0$ on $D$ and $q : F_2^M\rightarrow F_2$ is arbitrary, then $gh=(gh_1,\ldots, gh_M)$ is also identically $0$ on $D$.
\par
So if we let the set of Boolean networks that are identically $0$ on $D$ be $S$, then the Boolean networks that fit the profile are given by the set $f+S=\{f+g\;|\; g\in S\}$, where $f$ is a fixed Boolean network that fits the profile.  According to our definition, $f+S$ is a PMEBN. One of the properties that distinguish Boolean networks from polynomial systems over other fields is that the set $S$ can be described explicitly by the following theorem \cite{Zou11,Zou12}:
\begin{theorem} There exists a unique $q : F_2^M\rightarrow F_2$ such that $g=(g_1,\ldots, g_M)\in S$ if and only if $g_i = m_iq$ for some $m_i : F_2^M\rightarrow F_2$, $1\le i\le M$. Furthermore, $q$ can be computed explicitly by 
\bea\label{ec1}
q = 1+\sum_{a\in D}p_a,
\eea
where $p_a = \prod_{i=1}^M(x_i+a_i+1)$ for $a=(a_1,\ldots,a_M)\in F_2^M$.
\end{theorem}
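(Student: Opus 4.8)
The plan is to carry everything out inside the Boolean ring $B[\mathbf{x}]$, which we identify with the ring of all functions $F_2^M\to F_2$ under pointwise operations (the evaluation map is an isomorphism of $F_2$-algebras: it is onto because the $p_a$ below are singleton indicators, and both rings have $F_2$-dimension $2^M$). Under this identification every element is idempotent, since $c^2=c$ for all $c\in F_2$. The first observation is that $p_a=\prod_{i=1}^M(x_i+a_i+1)$ is exactly the indicator function of $\{a\}$: evaluated at $b$, the $i$-th factor is $1$ when $b_i=a_i$ and $0$ otherwise, so $p_a(b)=1$ iff $b=a$. Consequently $\sum_{b\in F_2^M}p_b$ is the constant function $1$, so the claimed $q=1+\sum_{a\in D}p_a=\sum_{b\in F_2^M}p_b+\sum_{a\in D}p_a=\sum_{b\notin D}p_b$ is precisely the indicator of $F_2^M\setminus D$; in particular $q(a)=0$ for $a\in D$ and $q(b)=1$ for $b\notin D$. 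Identifying this $q$ is the crux of the argument; the rest is bookkeeping.

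Next I would reduce the vector statement to a scalar one and prove it. A tuple $g=(g_1,\dots,g_M)$ lies in $S$ precisely when each $g_i$ vanishes on $D$, i.e.\ lies in $I(D):=\{h:F_2^M\to F_2\mid h|_D\equiv 0\}$; and ``$g_i=m_iq$ for some $m_i$'' means exactly $g_i\in qB[\mathbf{x}]$. So the theorem reduces to the claim $I(D)=qB[\mathbf{x}]$. One inclusion is immediate: if $h=mq$ then $h(a)=m(a)q(a)=0$ for every $a\in D$, so $h\in I(D)$. For the other, if $h\in I(D)$ then $hq=h$ by comparing values pointwise --- at $a\in D$ both sides are $0$, at $b\notin D$ both sides equal $h(b)$ --- so $h=hq\in qB[\mathbf{x}]$, taking $m_i=h$. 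Reading this componentwise yields the stated equivalence with the explicit $q$ of (\ref{ec1}).

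Finally, for uniqueness, suppose $q'$ is any function for which the equivalence in the theorem holds. Taking $m_1=1$ and $m_2=\dots=m_M=0$ produces the element $(q',0,\dots,0)\in S$, hence $q'\in I(D)$ and $q'B[\mathbf{x}]\subseteq I(D)$. Conversely $(q,0,\dots,0)\in S$, since $q\in I(D)$ was shown above, so applying the equivalence for $q'$ gives $q=m_1q'\in q'B[\mathbf{x}]$, whence $I(D)=qB[\mathbf{x}]\subseteq q'B[\mathbf{x}]$. Thus $qB[\mathbf{x}]=q'B[\mathbf{x}]$; writing $q=uq'$ and $q'=vq$ and using idempotence, $qq'=(uq')q'=uq'=q$ and $qq'=q(vq)=vq=q'$, so $q=q'$. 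The argument has no serious obstacle: the only points requiring care are the identification of $B[\mathbf{x}]$ with the function ring (so that idempotence and the description of the $p_a$ as the primitive idempotents are legitimate), and reading ``unique $q$'' as ``unique element making the stated equivalence true'', which is exactly what lets the idempotence trick close off the uniqueness.
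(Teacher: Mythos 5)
Your proof is correct. Note that the paper itself does not prove this theorem---it is quoted from the references \cite{Zou11,Zou12}---so there is no in-text argument to compare against; judged on its own, your write-up is a complete and self-contained justification. The three pillars are all sound: (1) the identification of $B[\mathbf{x}]$ with the ring of functions $F_2^M\to F_2$, under which every element is idempotent and the $p_a$ are the point indicators, so that $q=1+\sum_{a\in D}p_a$ is the indicator of $F_2^M\setminus D$; (2) the reduction to the scalar statement $I(D)=qB[\mathbf{x}]$, with the key pointwise identity $hq=h$ for $h\in I(D)$ giving the nontrivial inclusion; and (3) the uniqueness argument, which correctly reads ``unique $q$'' as uniqueness of the element making the biconditional hold, derives $qB[\mathbf{x}]=q'B[\mathbf{x}]$ from the two directions of the equivalence, and closes with the standard fact that a principal ideal of a Boolean ring has a unique generator ($q=qq'=q'$ by idempotence). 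This is essentially the canonical argument one would expect for this statement, and it also explains the paper's surrounding remarks (a)--(c) about $S$ being closed under addition and multiplication by arbitrary functions, since $I(D)$ is exactly the vanishing ideal of $D$.
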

\par
A special $f=(f_1,f_2,\ldots,f_M)$ that fits the profile can also be found explicitly by using the following formula to construct its component functions:
\bea\label{ec2}
f_i = \sum_{\substack{a\in D\\f_i(a)=1}}p_a,\quad 1\le i\le M.
\eea 
\par
These closed formulas exist only in the Boolean case. For other finite fields, a computation of the generators, such as a Gr\"{o}bner basis, would be needed. Here is an example.
\begin{example} Suppose we know that a network $f = (f_1,f_2) : F_2^3 \rightarrow F_2^3$, where $f_1 : F_2^3 \rightarrow F_2^2$ and $f_2 : F_2^3 \rightarrow F_2$, satisfies $f(000)=(000)$ and $f(101)=(111)=f(111)$ (see Example 2.1). It is easy to see that one of the networks satisfying these conditions is $h=(h_1,h_2)$, where $h_1=(x_1,x_1)$ and $h_2=x_1$. Here the set $D =\{(000),(101),(111)\}$ and 
\be
q &=& 1 + p_{000} + p_{101} + p_{111}\\
{} &=& 1+ (x_1+1)(x_2+1)(x_3+1) +  x_1(x_2+1)x_3 +  x_1x_2x_3\\
{} &=& x_1x_2x_3+x_1x_2+x_2x_3+x_1+x_2+x_3.
\ee
Compare with the function $f=((f_{11},f_{12}),f_2)$ of Example 2.1, which also satisfies the given conditions,  we have
{\small
\be
f_{11} &=& x_1+(x_1x_2+x_1x_3+x_2x_3+x_1)q,\\
f_{12} &=& x_1+(x_1x_2+x_1x_3+x_2x_3+x_3)q,\\
f_2 &=&  x_1+ (x_1+x_3)q.
\ee }
That is 
{\small
\be
f = h + q(x_1x_2+x_1x_3+x_2x_3+x_1,x_1x_2+x_1x_3+x_2x_3+x_3,x_1+x_3).
\ee}
\end{example}
We give another example of PMEBNs.
\begin{example} Given an MEBN $f =(f_1,\ldots, f_n)$ as in Definition 2.1, we can construct a PMEBN $P_f$ by setting
$P_f=\{p_{f_1},\ldots, p_{f_n}\}$, where the functions $p_{f_i} : F_2^M\rightarrow F_2^M$ are defined by
\be
p_{f_1}(\mathbf{x}_1,\mathbf{x}_2,\ldots, \mathbf{x}_n) &=&  (f_1(\mathbf{x}),\mathbf{x}_2,\ldots, \mathbf{x}_n), \\
p_{f_2}(\mathbf{x}_1,\mathbf{x}_2,\ldots, \mathbf{x}_n) &=&  (\mathbf{x}_1,f_2(\mathbf{x}),\ldots, \mathbf{x}_n), \\ 
{} &\vdots& {}\\
p_{f_n}(\mathbf{x}_1,\mathbf{x}_2,\ldots, \mathbf{x}_n) &=& (\mathbf{x}_1,\mathbf{x}_2,\ldots, f_n(\mathbf{x})),
\ee
where $\mathbf{x} = (\mathbf{x}_1,\mathbf{x}_2,\ldots, \mathbf{x}_n)$.
\end{example}
We will see in the next section that the asynchronous Boolean models considered in this paper can be analyzed as dynamical systems of PMEBNs under the setting of Example 3.3.
\par\medskip
\section{Dynamics of Parametrized Boolean Networks}
\par
We consider dynamical systems associated with a PMEBN in this section.
\par
\begin{definition} Given a PMEBN $P$ and a sequence of functions $s_P=\{f_i\}_{i=1}^{\infty}$, where $f_i\in P$ are chosen according to some rule $s$, we define the corresponding dynamical system of $s_P$ by 
\bea\label{ed1}
\qquad\mathbf{x},\;f_1(\mathbf{x}),\; (f_2\circ f_1)(\mathbf{x}),\;\ldots,\; (f_t\circ\cdots\circ f_1)(\mathbf{x}),\;\ldots,\;\forall\;\mathbf{x}\in F_2^M.
\eea
\end{definition} 
\par
We consider some Boolean models we have seen according to this definition.
\par
\begin{example} {\bf Synchronous Boolean Models.} If $P=\{f\}$, there is only one sequence that can be formed from $P$, that is, all $f_i=f$. In this case, the corresponding dynamical system is just the usual synchronous model defined by $f$, where the $t$th iteration is 
\be
f^t(\mathbf{x})=(\underbrace{f\circ f\circ\cdots\circ f}_{\mbox{$t$ terms}})(\mathbf{x}), \quad t=1,2,\ldots.
\ee
\end{example}
\par
The two asynchronous Boolean models mentioned in the introduction section can be treated as dynamical systems of PMEBNs.
\begin{example} {\bf Random Asynchronous Boolean Models.} 
For a given MEBN $f$, let $P_f=\{p_{f_1},\ldots, p_{f_n}\}$ be the PMEBN defined in Example 3.3. We define a sequence, denoted by $s_f^r$, as follows. The sequence consists of functions $g_i\in P_f,\; i\ge 1$, such that each block of $n$ functions 
\bea\label{ed2}
g_{tn+1},\; g_{tn+2},\;\ldots,\; g_{(t+1)n},\; t = 0, 1, 2, \ldots,
\eea
is a random permutation of the $n$ functions $(p_{f_1},\ldots, p_{f_n})$. Let the dynamical system be defined as in Definition 4.1. For any $\mathbf{x}\in F_2^M$, let 
\be
\mathbf{x}_0 = \mathbf{x},\; \mathbf{x}_1=g_1(\mathbf{x}),\;\ldots,\;\mathbf{x}_m=g_m(\mathbf{x}_{m-1}),\;\ldots,
\ee 
and write $(fr)_m(\mathbf{x})=\mathbf{x}_m$. Then the dynamics of the random asynchronous Boolean model defined by $f$ can be studied by using the subsequence $\{(fr)_{tn}(\mathbf{x})\}_{t=0}^{\infty}$.   
\end{example}
\par
Note that we do not use the full sequence $\{(fr)_m(\mathbf{x})\}_{m=0}^{\infty}$ to study the long term behavior of a random asynchronous Boolean model, since the whole network is considered as updated only after a full round of updates is completed, and according to the definition, each $g_m$ updates only one node. Note that the time steps are indexed by $t$.
\par
\begin{example} {\bf Deterministic Moduli Asynchronous Boolean Models.} Recall that a deterministic moduli asynchronous Boolean model is defined by a fixed $f$ and pairs of positive integers $Q_i<P_i,\;1\le i\le n$. Let $id$ be the identity function (i.e. $id(\mathbf{x})=\mathbf{x},\;\forall \mathbf{x}$), and let $P'_f=\{id,p_{f_1},\ldots, p_{f_n}\}$. We define a sequence $s_f^d$ as follows. The sequence consists of functions $g_i\in P'_f,\; i\ge 1$, such that each block of $n$ functions 
\bea\label{ed3}
g_{tn+1},\; g_{tn+2},\;\ldots,\; g_{(t+1)n},\; t = 0, 1, 2, \ldots,
\eea
are defined by $g_{tn+i}=p_{f_i}$ if $t+1\equiv Q_i$ (mod $P_i$), or $g_{tn+i}=id$ otherwise. As in Example 4.2, for any $\mathbf{x}\in F_2^M$ and $m\ge 1$, let 
\be
\mathbf{x}_0 = \mathbf{x},\; \mathbf{x}_1=g_1(\mathbf{x}),\;\ldots,\;\mathbf{x}_m=g_m(\mathbf{x}_{m-1}),\;\ldots,
\ee 
and write $(fd)_m(\mathbf{x})=\mathbf{x}_m$. Then the dynamics of the deterministic moduli asynchronous Boolean model defined by $f$ can be studied by using the subsequence $\{(fd)_{tn}(\mathbf{x})\}_{t=0}^{\infty}$. 
\end{example}
\par\medskip
We consider a more general concept than the concept of limit cycles for the long term behavior of a PMEBN.
\begin{definition} Given a PMEBN $f$ and an associated sequence of functions $g_m(\mathbf{x}),\;m=1,2,\ldots$, that defines the dynamical system $h_m=g_1\circ\cdots\circ g_m$. Suppose the dynamics of the given system is defined by the subsequence $h_{k_i}(\mathbf{x}),\;0=k_0<k_1<k_2<\cdots$. A subset $S\subset F_2^M$ is called an attractor set of the dynamical system if 
\par
(i) $h_{k_i}(S) \subset S,\;\forall i$, and 
\par
(ii) for any $\mathbf{y}\in S$ and any positive integer $N$, there exists $N<k_i$ such that $\mathbf{y}\in h_{k_i}(S)$. 
\par
An attractor set $S$ is called stable, if there is a positive integer $N$ such that for all $N<i$, $h_{k_i}(S)=S$.  An attractor set $S$ is called indecomposable if there is a fixed $\mathbf{x}\in F_2^M$ such that for any $\mathbf{y}\in S$ and any positive integer $N$, there exists $i>N$ such that $h_{k_i}(\mathbf{x})=\mathbf{y}$.
\end{definition}
\par\medskip
We give some examples for this definition.
\begin{example} For any Boolean network $f$, any limit cycle of the synchronous Boolean model defined by $f$ is an indecomposable stable attractor set, and any union of limit cycles is a stable attractor set. For asynchronous models defined by $f$, the stable states are indecomposable attractor sets, each contains a single state.
\end{example}
\par
Consider a concrete example. 
\begin{example} Let $f=(f_1,f_2,f_3)$, where 
\be
f_1 = x_2,\quad f_2=x_3+1,\quad f_3=x_2+x_3.
\ee
The state space graph of this Boolean network is given by Figure 3. 
\begin{figure}[h]
\begin{center}
\includegraphics[width=1.5in, height=1in]{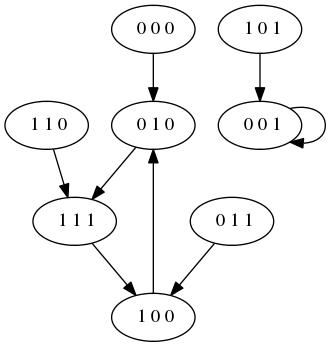} 
\caption{{\footnotesize Sate space graph of $f$.}}
\label{Fig3}
\end{center}
\end{figure}
\par
For this Boolean network, no matter what scheme we use for the model, the system will end up going into the stable state $(001)$ from the initial state $(101)$. This is because the updates of the second and the third nodes do not change their statuses, whereas an update of the first node will send the network from $(101)$ to $(001)$. However, things are quite different for the branch with a limit cycle of length $3$. Suppose we use the random asynchronous scheme for the Boolean model given by $f$. Then an update may send the system from this branch to the other branch. For instance, starting from the state $(011)$, if the second node is updated first, then the system will get into the stable state $(001)$. Consider the three states that form the limit cycle of length $3$. We can see that, from $(100)$, the next state of the system remains in the branch no matter how the updates are carried out. But $(111)$ is special. From $(111)$, if the second node is updated first, the system will go to $(101)$ and will end up at the stable state $(001)$. Similarly, from $(010)$, there is a $\frac{1}{6}$ probability that the system will end up in the other branch (the update order is $3,2,1$). We can say that by cycling around the limit cycle of length $3$, the probability that the system stays at its current branch is $<\frac{2}{3}$, and the probability of moving to the other branch is $>\frac{1}{3}$. Since the probability of staying at the current branch for cycling around the limit cycle $t$-times is $<(\frac{2}{3})^t$, then as $t$ gets large, the chance that the system stays in its current branch approaches $0$. In fact, for the $5$th round, the probability reduces to about $0.13$. Thus, according to Definition 4.2, as a random model, this Boolean model has only one attractor set, which is given by the single stable state $(001)$.    
\end{example}
\par
We will consider random models in more detail in section 5. Consider another Boolean network with $3$ nodes. 
\begin{example} Let $f=(f_1,f_2,f_3)$, where 
{\small\be
f_1= x_1,\quad
f_2=x_1x_3 + x_1 + x_3,\quad
f_3=x_1x_2x_3 + x_1x_3 + x_1 + x_2 + 1.
\ee}
The state space graph of this Boolean network is given by Fig. 4. 
\begin{figure}[h]
\begin{center}
\includegraphics[width=1.5in, height=1in]{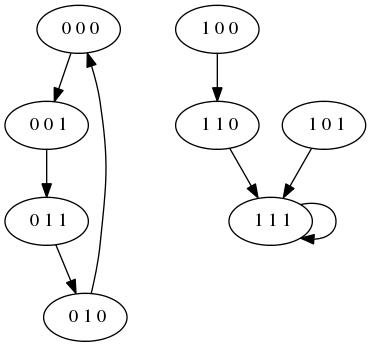} 
\caption{{\footnotesize Sate space graph of $f$.}}
\label{Fig4}
\end{center}
\end{figure}
For this Boolean network, the fixed point and the limit cycle of length $4$ are simple attractor sets regardless of the iteration schemes used.
\par
\end{example}
\par
We have a sufficient condition for a limit cycle of a MEBN $f$ to be an attractor set for different iteration schemes.
\begin{proposition} For a given MEBN $f$, let $P_f=\{p_{f_1},\ldots, p_{f_n}\}$ be the PMEBN defined in Example 3.3, and let $L$ be a limit cycle of $f$ (as a synchronous model). If $p_{f_i}(L) = L,\; 1\le i\le n$, then $L$ is a stable attractor set for all iteration schemes considered in this paper (i.e. no flip function is allowed). 
\end{proposition}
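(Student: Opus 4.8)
The plan is to reduce the proposition to a single elementary observation: under the stated hypothesis, every elementary update that occurs in the iteration schemes of Examples 4.1--4.3 restricts to a \emph{permutation} of the finite set $L$, and this property survives composition. First I would note that because $L$ is finite and $p_{f_i}(L)=L$, the restriction $p_{f_i}\colon L\to L$ is a surjection of a finite set onto itself, hence a bijection; thus each $p_{f_i}$ permutes $L$. The identity function $id$ also permutes $L$, so every member of $P_f=\{p_{f_1},\ldots,p_{f_n}\}$ and of $P'_f=\{id\}\cup P_f$ restricts to an element of the symmetric group on $L$. Since a composition of permutations of $L$ is again a permutation of $L$, any finite composition of functions drawn from $P'_f$ maps $L$ bijectively onto $L$.

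Next I would unwind the three iteration schemes and read off the maps $h_{k_t}$ that, by definition, govern the long-term behavior. For the synchronous model (Example 4.1) no hypothesis is needed: $L$ is by assumption a limit cycle of $f$, so $f(L)=L$, and $L$ is already a stable (indeed indecomposable) attractor set by Example 4.4. For the random asynchronous model (Example 4.2) the governing maps are the $(fr)_{tn}$, each a composition of $tn$ functions from $P_f$; for the deterministic moduli model (Example 4.3) the governing maps are the $(fd)_{tn}$, each a composition of $tn$ functions from $P'_f$. In both cases the previous paragraph gives $h_{k_t}(L)=L$ for the subsequence $k_t=tn$ (in the random case, for each choice of the block permutations).

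Finally I would verify Definition 4.2 directly. Condition (i) is immediate, since $h_{k_t}(L)=L\subset L$ for all $t$. For condition (ii), given $\mathbf{y}\in L$ and a positive integer $N$, choose $t$ with $k_t=tn>N$; then $\mathbf{y}\in L=h_{k_t}(L)$, as required, so $L$ is an attractor set. Stability holds because $h_{k_t}(L)=L$ for every $t\ge 1$, so one may take $N=1$ in the definition of a stable attractor set. I expect the only delicate point to be the bookkeeping that every map entering the schemes under consideration indeed lies in $P'_f$; this is precisely why the proposition excludes flip (negation) functions --- such a function need not send $L$ to $L$, and then the closure-under-composition step collapses. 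Everything else is a routine check against Definition 4.2.
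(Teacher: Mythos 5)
Your proof is correct and fills in exactly the details the paper declares ``self evident'': the hypothesis $p_{f_i}(L)=L$ on a finite set forces each update (and hence each composition $h_{k_t}$ arising in the synchronous, random asynchronous, and deterministic moduli schemes) to permute $L$, from which both conditions of Definition 4.2 and stability follow immediately. This is the same (and essentially the only) argument, just written out; the paper's only additional remark is that $L$, while stable, may decompose into indecomposable attractor subsets, which your proof correctly does not claim to rule out.
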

\begin{proof}
This is self evident. Note however, a limit cycle can decompose into indecomposable attractor subsets.
\end{proof} 
\par\medskip
\section{Dynamics of Random Asynchronous Boolean Models}
\par
We consider the dynamics of random asynchronous Boolean models of Example 4.2 in this section. Let $f$, $P_f=\{p_{f_1},\ldots, p_{f_n}\}$, and $\{(fr)_m(\mathbf{x})\}_{m=0}^{\infty}$ be defined as in Example 4.2. Recall that the dynamic of such a system is defined by $\{(fr)_{tn}(\mathbf{x})\}_{t=0}^{\infty}$.
\par
For any pair of states $\mathbf{x},\mathbf{y}\in F_2^M$, we define $p_{\mathbf{x},\mathbf{y}}\in [0,1]$ as follows. Let $a_{\mathbf{x},\mathbf{y}}$ be the number of permutations $\{p'_{f_1},\ldots, p'_{f_n}\}$ of $P_f$ such that $p'_{f_n}\circ\cdots\circ p'_{f_1}(\mathbf{x})=\mathbf{y}$, and set $p_{\mathbf{x},\mathbf{y}}=a_{\mathbf{x},\mathbf{y}}/n!$. Consider the stochastic matrix $P_{fr}$ with entries $p_{\mathbf{x},\mathbf{y}}$. We can label the states such that $P_{fr}$ has the following block upper triangular form (see p. 695 of \cite{Mey}):
\bea\label{E1}
P_{fr}=\left(\begin{array}{c|c}
        \begin{array}{ccc}
          P_{11} & \cdots & P_{1r}\\
          {}     & \ddots & \vdots\\
          {}     & {}     &  P_{rr}
        \end{array} & 
        \begin{array}{ccc}
          P_{1,r+1} & \cdots & P_{1m}\\
          \vdots   & {} & \vdots\\
          P_{r,r+1} & \cdots & P_{rm}
        \end{array} \\ \hline
        {} &
        \begin{array}{ccc}
          P_{r+1,r+1} & {} & {}\\
          {}     & \ddots & {}\\
          {}     & {}     &  P_{mm}
        \end{array}  
      \end{array}
\right),
\eea
where each $P_{11},\ldots, P_{rr}$ is either irreducible or $(0)_{1\times 1}$, and $P_{r+1,r+1},\ldots, P_{mm}$ are irreducible. We may further assume that all $(1)_{1\times 1}$, if any (these correspond to the fixed points of $f$), are at the low right end conner. For each $1\le i\le m-r$, let $\pi^T_i$ be the left-hand Perron vector for $P_{r+i,r+i}$. We have the following theorem:
\begin{theorem} Given a PMEBN $f$, consider the random asynchronous Boolean model defined by $f$, and define the stochastic matrix $P_{fr}$ as in (\ref{E1}). 
\par
(i) For any $\mathbf{x}\in F_2^M$, there exists a positive integer $N_{\mathbf{x}}$ such that $(fr)_{tn}(\mathbf{x})$ belongs to one of the indecomposable attractor sets defined by the subsystems $P_{r+i,r+i},1\le i\le m-r$, for all $t>N_{\mathbf{x}}$.
\par
(ii) If $S_i=\{\mathbf{s}_{i1},\ldots,\mathbf{s}_{ic_i}\}$ is the attractor set that corresponds to $P_{r+i,r+i}$, $1\le i\le m-r$, then $c_i$ is equal to the size of $P_{r+i,r+i}$. If $(fr)_{tn}(\mathbf{x})$ is eventually in $S_i$, then as $t\rightarrow \infty$, the fraction of time that $(fr)_{tn}(\mathbf{x})=\mathbf{s}_{iu},1\le u\le c_i$, has the corresponding component of the vector $\pi^T_{i}$ as its limit.
\end{theorem}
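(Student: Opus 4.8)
The plan is to recognize the subsequence $\{(fr)_{tn}(\mathbf{x})\}_{t\ge 0}$ as a sample path of a time-homogeneous Markov chain on the finite set $F_2^M$ whose transition matrix is exactly the stochastic matrix $P_{fr}$, and then to deduce both assertions from the classical canonical-form decomposition and the ergodic theorem for finite Markov chains. So the proof will have essentially no analytic content of its own; the work is in setting up the probabilistic model correctly and reconciling it with Definition~4.2.

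First I would establish the Markov property. Write $X_t=(fr)_{tn}(\mathbf{x})$. By the construction of $s_f^r$ in Example~4.2, the $(t+1)$-st block $(g_{tn+1},\ldots,g_{(t+1)n})$ is a uniformly random permutation of $(p_{f_1},\ldots,p_{f_n})$ chosen independently of all earlier blocks, and $X_{t+1}=g_{(t+1)n}\circ\cdots\circ g_{tn+1}(X_t)$ is a deterministic function of $X_t$ together with that block. Hence $\Pr(X_{t+1}=\mathbf{y}\mid X_t=\mathbf{w},X_0,\ldots,X_{t-1})=a_{\mathbf{w},\mathbf{y}}/n!=p_{\mathbf{w},\mathbf{y}}$, so $\{X_t\}_{t\ge0}$ is Markov with transition matrix $P_{fr}$. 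Relative to the canonical form (\ref{E1}) (see \cite{Mey}), the states indexing $P_{11},\ldots,P_{rr}$ are the transient (inessential) states, while the states indexing each irreducible block $P_{r+i,r+i}$ form a closed communicating class $C_i$: because no block sits below the horizontal line joining $C_i$ to the rest, $p_{\mathbf{s},\mathbf{y}}=0$ whenever $\mathbf{s}\in C_i$ and $\mathbf{y}\notin C_i$.

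For part (i), I would invoke the standard fact that a finite Markov chain leaves the transient states after almost surely finitely many steps and is thereafter absorbed in a single ergodic class. This gives, for each starting $\mathbf{x}$, an almost surely finite $N_{\mathbf{x}}$ and an index $i$ with $X_t\in C_i$ for all $t>N_{\mathbf{x}}$; closedness of $C_i$ makes this containment hold along every realization, so $h_{k_i}(C_i)\subset C_i$ as Definition~4.2(i) demands. Irreducibility and finiteness of $C_i$ force recurrence, so starting from any fixed state of $C_i$ every state of $C_i$ is revisited infinitely often with probability one; this is precisely indecomposability and stability of $C_i$ as an attractor set, and it also shows the attractor set equals $C_i$, whence $c_i$ is the size of $P_{r+i,r+i}$. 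For the limiting-frequency claim in part (ii), I would apply Perron--Frobenius to the irreducible stochastic matrix $P_{r+i,r+i}$ (Perron value $1$, strictly positive left Perron vector $\pi_i^T$, which, normalized to a probability vector, is the unique stationary distribution of the restricted chain) together with the strong law of large numbers / ergodic theorem for finite Markov chains: conditioned on absorption in $C_i$, the fraction of times $t\le T$ with $X_t=\mathbf{s}_{iu}$ converges almost surely, as $T\to\infty$, to the $u$-th component of $\pi_i^T$.

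The routine ingredients here are the Markov property and the appeals to the canonical-form and ergodic theorems. The one point that needs genuine care, and which I expect to be the main obstacle, is the translation between the probabilistic behavior of the chain and the deterministic phrasing of Definition~4.2: one must make explicit that for a random model conditions (i)--(ii) of that definition are to be read as holding along almost every realization, and verify that the block structure (\ref{E1}) really does force each $C_i$ to be closed, so that the invariance in (i) holds surely rather than merely almost surely. This is bookkeeping rather than a hard estimate, but it is where the argument must be stated with precision.
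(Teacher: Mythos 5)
Your proposal is correct and follows essentially the same route as the paper, whose entire proof is a one-line appeal to the canonical-form and ergodic results for finite Markov chains in \cite{Mey}; you have simply made explicit the Markov-chain setup, the transient/closed-class dichotomy, and the Perron-vector limiting-frequency argument that the citation hides. Your closing caveat about reading Definition~4.2 almost surely for the random model is apt and is a point the paper glosses over.
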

\begin{proof} This follows directly from the corresponding results in \cite{Mey}.
\end{proof}
\par We consider an example.
\begin{example} Let $f=(f_1,f_2)$ be an MEBN, where $f_1=(f_{11},f_{12})$ and
\be
f_{11} &=& x_1x_2 +x_1x_3+x_2+x_3,\\
f_{12} &=& x_1x_2x_3 + x_1x_2 + x_2x_3 + x_1+ x_3,\\ 
f_2 &=& x_1x_2(x_3 + 1).
\ee
In this network, the first node has four expression levels and the second node has two expression levels. As a synchronous model, $f$ has $3$ limit cycles (see Fig. 5).
\begin{figure}[h]
\begin{center}
\includegraphics[width=1.5in, height=1in]{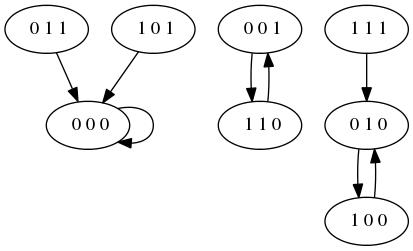} 
\caption{{\footnotesize Sate space graph of $f$.}}
\label{Fig5}
\end{center}
\end{figure}
If we consider the random asynchronous model defined by $f$ and order the states by
\be
111,\;	001,\;	110,\;	011,\;	101,\;	100,\;	010,\;	000,
\ee
then corresponding stochastic matrix
{\footnotesize
\be
P_{fr}=\left(\begin{array}{c|c|c}
        \begin{array}{ccccc}
          0 & 0.5 & {} & {} & {}\\
          {} & 0  & 0.5  & {} & {}\\
          {} & {} & 0 & 0.5  & {}\\
          {} & {} & {} & 0  & 0\\
          {} & {} & {} & {}  & 0
        \end{array} & 
        \begin{array}{cc}
          {} & {}\\
          {} & {}\\
          {} & {}\\
          0.5 & 0\\
           0 & 0.5
        \end{array} &
        \begin{array}{c} 
          0.5\\
          0.5\\
          0.5\\
          0.5\\
          0.5
        \end{array}\\ \hline
        {} &
        \begin{array}{cc}
          0 & 1\\
           1 & 0    
        \end{array}  & 
        \begin{array}{c}
                     0\\
                     0
        \end{array} \\ \hline
        {} & {} & 1
      \end{array}
\right).
\ee }
Thus the random asynchronous model has two simple attractor sets $\{000\}$ and $\{010,100\}$, the Perron vector for the latter is $(1/2,1/2)^T$, and the subsystem hitting each state in an equal amount of time in the long run. 
\end{example}
\par
{\bf Remark.} In applications, to reduce the complexity incurred by the number of permutations needed to be considered in the computation of the entries of the stochastic matrix in Theorem 5.1, we can employ the idea of MEBN by grouping the nodes which can be considered as updated synchronously together. If $f=(f_1,f_2,\ldots,f_M)$ and we can divide the nodes into $k$ ($<M$) groups such that the nodes in each group are updated simultaneously, then we only need to work with $k!$ permutations. 
\par\medskip
\section{Dynamics of Deterministic Moduli Asynchronous Boolean Models} 
\par
We consider attractor set structures of deterministic moduli asynchronous Boolean models in this section.  Given a fixed MEBN $f=(f_1,\ldots,f_n)$ and pairs of positive integers $Q_i<P_i,\;1\le i\le n$, let $P'_f=\{id,p_{f_1},\ldots, p_{f_n}\}$, $g_m$, and $(fd)_m$, $m=0,1,2,\ldots$, be defined as in Example 4.3. The model is {\it deterministic} in addition to the fact that $f$, $Q_i$, and $P_i$ are fixed (though they can be randomly generated), the updating order is also fixed: at a time step if several nodes are to be updated, then they are updated from left to right. 
\par
Recall that the long term behavior of the model, which will be denoted by $fd$, is defined by using the subsequence $\{(fd)_{tn}\}_{t=0}^{\infty}$. To simplify our writing, let $h_t=(fd)_{tn}$. Let $L$ be the least common multiple of $P_i,1\le i\le n$. Since each $P_i$ divides $L$, we have $m\equiv m+L\;(P_i), 1\le i\le n,\forall m$, and thus $g_{tn+i}=g_{(t+L)n+i}$. Therefore, the sequence of functions $\{g_m\}$ can be divided into identical blocks, each consists $Ln$ functions $g_1,g_2,\ldots,g_{Ln}$. Consider
\bea\label{F1}
h_L= (fd)_{Ln} =g_{Ln}\circ\cdots\circ g_2\circ g_1 : F_2^M\rightarrow F_2^M.
\eea
Then we have the following theorem.
\par
\begin{theorem} Notation as above.
\par
(i) The indecomposable attractor sets of $fd$ are given by the fixed points or cycles of length $>1$. The cycles of $fd$ of length $>1$, if any, are extensions of the limit cycles (including fixed points) of the synchronous Boolean model defined by $h_L$. The extension from a limit cycle of $h_L$ to an attractor cycle of $fd$ is as follows. If $\mathbf{x}$ and $\mathbf{y}$ are two states (could be identical) in a limit cycle of $h_L$ such that $h_L(\mathbf{x})=\mathbf{y}$, then $\mathbf{x}\rightarrow\mathbf{y}$ is extended by the distinct states contained in $\{h_1(\mathbf{x}),\ldots, h_{L-1}(\mathbf{x})\}$ naturally.
\par
(ii) Let the fixed point set of $f$ be $Z(f)$, let the fixed point set of $fd$ be $Z(fd)$, and let the fixed point set of each $h_i$ be $Z(h_i)$, $1\le i\le L$. Then 
\bea\label{F2}
Z(f)=Z(fd)=Z(h_1)\cap\cdots\cap Z(h_{L-1})\cap Z(h_L),
\eea 
and hence $Z(h_L)=Z(fd)$ if and only if $Z(h_L)\subset Z(h_1)\cap\cdots\cap Z(h_{L-1})$.
\end{theorem}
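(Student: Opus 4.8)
\emph{Proof plan.} The plan is to first exploit the periodicity of the update schedule to reduce everything to the single synchronous map $h_L$. Write $\phi_t=g_{tn}\circ\cdots\circ g_{(t-1)n+1}$ for the $t$-th block of updates, so $h_t=\phi_t\circ\cdots\circ\phi_1$ and $h_0=id$. Since $g_{(t-1)n+i}=p_{f_i}$ exactly when $t\equiv Q_i\pmod{P_i}$ and $P_i\mid L$, the block $\phi_t$ depends only on $t$ modulo $L$; in particular $\phi_{t+L}=\phi_t$. Regrouping $h_{kL}=\phi_{kL}\circ\cdots\circ\phi_1$ into $k$ consecutive length-$L$ blocks and using this periodicity yields $h_{kL}=h_L^{\,k}$, and more generally $h_{kL+r}=h_r\circ h_L^{\,k}$ for $0\le r<L$, $k\ge 0$. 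Everything below rests on this reduction to iterating $h_L$ and ``filling in'' with $h_1,\dots,h_{L-1}$.

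\textbf{Part (i).} First I would describe an arbitrary orbit. For fixed $\mathbf{x}$, the subsequence $\{h_{kL}(\mathbf{x})\}_k=\{h_L^{\,k}(\mathbf{x})\}_k$ is the synchronous orbit of $\mathbf{x}$ under $h_L$, hence it eventually enters a unique limit cycle $C=\{z_0,\dots,z_{p-1}\}$ of $h_L$ with $h_L(z_j)=z_{(j+1)\bmod p}$. Using $h_{kL+r}=h_r\circ h_L^{\,k}$, for all large $m=kL+r$ one gets $h_m(\mathbf{x})=h_r(z_j)$ with $j$ advancing by $1$ each time $k$ does; so the tail of the orbit runs cyclically through $z_0,h_1(z_0),\dots,h_{L-1}(z_0),z_1,h_1(z_1),\dots$, i.e. through the distinct states of $\widetilde C:=\{\,h_r(z_j):0\le r<L,\ 0\le j<p\,\}$, in the order that extends each edge $z_j\to z_{j+1}$ of $C$ by the intermediate states $h_1(z_j),\dots,h_{L-1}(z_j)$ — exactly the extension described in the statement. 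Taking $\mathbf{x}=z_0$, the orbit is already periodic and sweeps out all of $\widetilde C$, so $\widetilde C$ is an indecomposable, stable attractor set; conversely the set of states that an indecomposable attractor set hits infinitely often is one of these $\widetilde C$'s and it coincides with it. When $p=1$ and $h_1(z_0)=\cdots=h_{L-1}(z_0)=z_0$ the extension collapses to the single point $z_0$, a fixed point of $fd$; otherwise $\widetilde C$ has length $>1$. To make ``attractor set / cycle of $fd$'' fully precise I would pass to the honest deterministic map $\Phi$ on $F_2^M\times\{0,\dots,L-1\}$ that applies the appropriate block and advances the phase: its cycles of ``length'' $pL$ meet the phase-$0$ slice in exactly $p$ states, which form a limit cycle of $h_L$, and projecting to $F_2^M$ recovers $\widetilde C$; this gives the bijection between limit cycles of $h_L$ and indecomposable attractor sets of $fd$ claimed in (i).

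\textbf{Part (ii).} If $f(\mathbf{x})=\mathbf{x}$ then $f_i(\mathbf{x})=x_i$ for each $i$, so $p_{f_i}(\mathbf{x})=\mathbf{x}$; since also $id(\mathbf{x})=\mathbf{x}$, every $g_m$ fixes $\mathbf{x}$, whence $h_t(\mathbf{x})=\mathbf{x}$ for all $t$, giving $Z(f)\subseteq Z(fd)$. Next, $\mathbf{x}\in Z(fd)$ means $h_t(\mathbf{x})=\mathbf{x}$ for all $t$; applying $h_t=\phi_t\circ h_{t-1}$ inductively shows this is equivalent to $\phi_t(\mathbf{x})=\mathbf{x}$ for $1\le t\le L$, equivalently $h_t(\mathbf{x})=\mathbf{x}$ for $1\le t\le L$, so $Z(fd)=Z(h_1)\cap\cdots\cap Z(h_L)$. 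For the remaining inclusion $Z(fd)\subseteq Z(f)$ — where the left-to-right rule enters — fix $\mathbf{x}\in Z(fd)$ and a node $i$, pick $t^\ast\in\{1,\dots,L\}$ with $t^\ast\equiv Q_i\pmod{P_i}$, and set $\mathbf{y}_0=\mathbf{x}$, $\mathbf{y}_j=g_{(t^\ast-1)n+j}(\mathbf{y}_{j-1})$, so $\mathbf{y}_n=\phi_{t^\ast}(\mathbf{x})=\mathbf{x}$. Since $g_{(t^\ast-1)n+j}$ alters at most coordinate $j$, coordinate $k$ of $\mathbf{y}_j$ equals $x_k$ for $j<k$ and equals coordinate $k$ of $\mathbf{y}_k$ for $j\ge k$; as $\mathbf{y}_n=\mathbf{x}$ forces the latter to be $x_k$, we get $\mathbf{y}_j=\mathbf{x}$ for every $j$, hence $\mathbf{y}_{i-1}=\mathbf{y}_i=\mathbf{x}$, and since $g_{(t^\ast-1)n+i}=p_{f_i}$ this reads $f_i(\mathbf{x})=x_i$. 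As $i$ was arbitrary, $f(\mathbf{x})=\mathbf{x}$. The final equivalence is then formal: $Z(fd)=Z(h_L)\cap\bigl(Z(h_1)\cap\cdots\cap Z(h_{L-1})\bigr)\subseteq Z(h_L)$, so $Z(h_L)=Z(fd)$ iff $Z(h_L)\subseteq Z(h_1)\cap\cdots\cap Z(h_{L-1})$.

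\textbf{The hard part.} The periodicity identities and the fixed-point computation are routine; the real content is in (i): showing that \emph{no} recurrent set other than the extensions $\widetilde C$ occurs, which needs a clean model of the non-autonomous dynamics (the lift $\Phi$ above) together with a careful verification that $\Phi$-cycles correspond, after deduplication of repeated states, to limit cycles of $h_L$. A smaller but genuine subtlety is the inclusion $Z(fd)\subseteq Z(f)$, which would fail for unordered simultaneous updates and really uses that a block updates nodes one at a time from left to right.
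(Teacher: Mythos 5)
Your proof is correct and follows essentially the same route as the paper's: reduce to the synchronous map $h_L$ via the periodicity $\phi_{t+L}=\phi_t$, extend each limit cycle of $h_L$ by the intermediate states $h_1(\mathbf{x}),\ldots,h_{L-1}(\mathbf{x})$, and establish the two chains of inclusions among $Z(f)$, $Z(fd)$, and the $Z(h_i)$. You are in fact more careful than the paper in two spots it leaves implicit --- the coordinate-by-coordinate argument that a left-to-right block fixing $\mathbf{x}$ forces every intermediate state (hence every $f_i$) to fix $\mathbf{x}$, and the lift $\Phi$ on $F_2^M\times\{0,\dots,L-1\}$ that makes ``attractor set'' precise for the non-autonomous dynamics --- but these are refinements of, not departures from, the paper's argument.
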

\begin{proof} (i) Let $S$ be an indecomposable attractor set of $fd$. Then $h_t(S)\subset S$ for any $t\ge 1$ by definition, so $h_L(S)\subset S$ and hence $S$ contains at least one limit cycle, say $C_L$, for the synchronous model defined by $h_L$. Let $\mathbf{x},\mathbf{y}\in C_L$ such that $h_L(\mathbf{x})=\mathbf{y}$, then 
\be
\mathbf{x}\rightarrow h_1(\mathbf{x})\rightarrow \cdots\rightarrow h_{L-1}(\mathbf{x})\rightarrow h_{L}(\mathbf{x}) =\mathbf{y}.
\ee
So we can extend the link $\mathbf{x}\rightarrow \mathbf{y}$ naturally. Do this for each pair of $\mathbf{x}\rightarrow \mathbf{y}$ in $C_L$, we then get a cycle which is an indecomposable attractor set of $fd$. This cycle must be equal to $S$. This proves (i).
\par
(ii) If $\mathbf{x}$ is a fixed point of $f$, then $g_m(\mathbf{x})=\mathbf{x}$ for all $m$, so $h_t(\mathbf{x})=\mathbf{x}$ for all $t$. If $\mathbf{x}$ is a fixed point of $fd$, then $h_t(\mathbf{x})=\mathbf{x}$ for all $t$, hence
\be
Z(f)\subset Z(fd)\subset Z(h_1)\cap\cdots\cap Z(h_{L-1})\cap Z(h_L).
\ee
On the other hand, we note that if $\mathbf{x}$ is a fixed point of $h_L$, then $\mathbf{x}$ is a fixed point of $fd$ if and only if $h_i(\mathbf{x})=\mathbf{x}$ for $1\le i\le L-1$ by part (i). We also note that if $\mathbf{x}$ is a fixed point of $fd$, then $\mathbf{x}$ is a fixed point of $h_t$ implies that $\mathbf{x}$ is fixed by each of the $f_i$ such that $g_{tn+i}=p_{f_i}$. Since for any $i$, $p_{f_i}$ must appear in $h_t$ such that $t+1\equiv Q_i$ (mod $P_i$), $\mathbf{x}$ is fixed by all $f_i$, and hence it is a fixed point for $f$. Thus we have 
\be
Z(f)\supset Z(fd)\supset Z(h_1)\cap\cdots\cap Z(h_{L-1})\cap Z(h_L),
\ee
and (\ref{F2}).
\end{proof} 
\par\medskip
{\bf Remark.} Theorem 6.1 gives a clear picture for the attractor sets of a deterministic moduli asynchronous Boolean model.  Additionally, the results address some of the questions from \cite{Ger} based on simulation observations, such as what types of attractor structure can such a Boolean model have. Note however, that the cycles of such a model are different from the cycles of a synchronous model: the system may not hit each state in a cycle in an equal amount of time in the long run since the states in $\{h_1(\mathbf{x}),\ldots, h_{L-1}(\mathbf{x})\}$ are not necessary distinct.
\par
\medskip
\begin{example} We use the Boolean network of Example 4.5 to construct an example of deterministic moduli asynchronous model. We have $f=(f_1,f_2,f_3)$, where 
\be
f_1 = x_2,\quad f_2=x_3+1,\quad f_3=x_2+x_3.
\ee
Assume updating scheme is defined by $P_i=2, 1\le i\le 3$, and $Q_1=Q_3 = 1, Q_2=0$. That is, nodes $1$ and $3$ are updated at the odd time steps, and node $2$ are updated at the even time steps. 
Then $L=2$, $Ln=6$, and  
\be
g_1=P_{f_1}, \; g_2=id,\; g_3=P_{f_3},\; g_4=id,\; g_5=P_{f_2},\;g_6=id.
\ee
So
\be
h_1 &=& P_{f_3}\circ P_{f_1} = (x_2,x_2,x_2+x_3),\\
h_2 &=& P_{f_2}\circ P_{f_3}\circ P_{f_1} = (x_2,x_2+x_3+1,x_2+x_3).
\ee
\begin{figure}[h]
\begin{center}
\includegraphics[width=1.5in, height=1in]{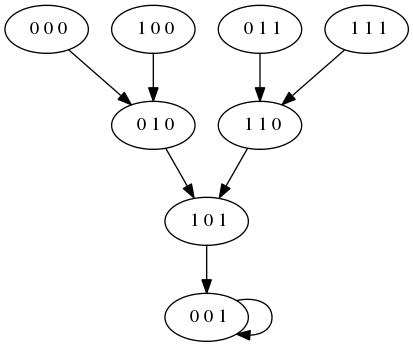} 
\caption{{\footnotesize Sate space graph of $h_2$.}}
\label{Fig6}
\end{center}
\end{figure}
Since the synchronous model defined by $h_2$ has only one limit cycle given by the single fixed point $(001)$ (see Fig. 6) and it is fixed by $h_1$, the deterministic moduli asynchronous model has only one attractor set given by the fixed point $(001)$. This is consistent with the observation in Example 4.5 as one would expect.
\end{example}
\section{Concluding remark}
\par
We have discussed two practical issues for the applications of Boolean networks in modeling biological systems: how to construct Boolean network based models in which the expression levels of the nodes can vary and are allowed to be more than two, and how to introduce parameters for Boolean models. Our treatment is algebraic and is based on polynomial systems over the field of two elements. One of our motivations is to extend the formulation of typical Boolean networks so we can analyze different asynchronous Boolean networks \cite{Albe, Ger} using algebraic methods \cite{Hin, Jar1, Laub, Mey, Zou11, Zou12}. We have applied our approach to perform analysis for two special types of asynchronous Boolean models, the random asynchronous model and the deterministic asynchronous moduli model. It is possible to combine the treatments of these two types of asynchronous Boolean models to analyze other asynchronous Boolean models, such as various asynchronous Boolean models described in \cite{Ger}, with some more complexity added to the notation and the descriptions of the associated dynamical systems. We will leave these analyses to a different report.
\par
We remark that algorithms for the detection of the fixed points of a Boolean network can be developed based on Theorem 2.1. Note that (compare with the remark after Theorem 2.1) SAT concerns whether a solution exists, not how to find all solutions. The author has developed an algorithm based on Theorem 2.1 for the computation of the fixed points of a Boolean network. Testing shows that it can handle interesting sizes Boolean networks, such as those having hundreds of nodes with an average connection $> 2$ in their dependence graphs. In particular, this algorithm works well with community-like networks where each (not too large) community can be densely connected, whereas communities are sparsely connected. The result will be reported somewhere else. We also remark that, though Theorem 5.1 provides a theoretical description for the long term behaviors of a random asynchronous PMEBN, the computational complexity limits the sizes of these random Boolean networks for which it can apply (see also the remark after Example 5.1). 

\end{document}